\RequirePackage{amsmath}
\documentclass[runningheads]{llncs}

\usepackage{amsmath}
\usepackage{amssymb}
\usepackage{graphics}
\usepackage{graphicx}
\usepackage[caption=false]{subfig}
\usepackage{tikz}
\usepackage{multirow}
\usetikzlibrary{positioning,chains,fit,shapes,calc,arrows,automata,shapes.geometric}

\newcommand{\mathsym}[1]{{}}

\newtheorem{fact}{Fact}

\newcommand{\ktc}{\texttt{k-terminal cut}}

\begin{document}

\title{Solving $(k-1)$-Stable Instances of \ktc{} with Isolating Cuts\thanks{The author is a Fellow of the National Physical Science Consortium.}}
\titlerunning{Stable Instances of \ktc{}}

\author{Mark Velednitsky \inst{1} \orcidID{0000-0003-1176-5159}}
\authorrunning{M. Velednitsky}
\institute{University of California, Berkeley \\ \email{marvel@berkeley.edu}}

\date{\today}

\maketitle

\begin{abstract}
The \ktc{} problem, also known as the Multiway Cut problem, is defined on an edge-weighted graph with $k$ distinct vertices called ``terminals.'' The goal is to remove a minimum weight collection of edges from the graph such that there is no path between any pair of terminals. The problem is NP-hard.

Isolating cuts are minimum cuts which separate one terminal from the rest. The union of all the isolating cuts, except the largest, is a $(2-2/k)$-approximation to the optimal \ktc{}. This is the only currently-known approximation algorithm for \ktc{} which does not require solving a linear program.

An instance of \ktc{} is $\gamma$-stable if edges in the cut can be multiplied by up to $\gamma$ without changing the unique optimal solution. In this paper, we show that, in any $(k-1)$-stable instance of \ktc{}, the source sets of the isolating cuts are the source sets of the unique optimal solution of that \ktc{} instance. We conclude that the $(2-2/k)$-approximation algorithm returns the optimal solution on $(k-1)$-stable instances. Ours is the first result showing that this $(2-2/k)$-approximation is an exact optimization algorithm on a special class of graphs.

We also show that our $(k-1)$-stability result is tight. We construct $(k-1-\epsilon)$-stable instances of the \ktc{} problem which only have trivial isolating cuts: that is, the source set of the isolating cuts for each terminal is just the terminal itself. Thus, the $(2-2/k)$-approximation does not return an optimal solution.
\end{abstract}

\section{Introduction}
The \ktc{} problem, also known as the Multiway Cut problem, is defined on an edge-weighted graph with $k$ distinct vertices called ``terminals.''  The goal is to remove a minimum weight collection of edges from the graph such that there is no path between any pair of terminals. The \ktc{} problem is known to be APX-hard \cite{dahlhaus1994complexity}.

In \cite{bilu2012stable}, Bilu and Linial introduced the concept of stability for graph cut problems. An instance is said to be $\gamma$-stable if the optimal cut remains uniquely optimal when every edge in the cut is multiplied by a factor up to $\gamma$. The concept of robustness in linear programming is closely related \cite{robinson1977characterization,ben2000robust}. Makarychev, Makarychev, and Vijayaraghavan \cite{makarychev2014bilu} showed that for $4$-stable instances of \ktc{}, the solution to a certain linear programming relaxation of the problem will necessarily be integer. Angelidakis, Makarychev, and Makarychev \cite{angelidakis2017algorithms} improved the result to $(2 - 2/k)$-stable instances using the same linear programming technique.

In an instance of \ktc{}, isolating cuts are minimum cuts which separate one terminal from the rest of the terminals. They can give useful information about the optimal solution: the source set of a terminal's isolating cut is a subset of that terminal's source set in an optimal solution \cite{dahlhaus1994complexity}. Furthermore, the union of all the isolating cuts, except for the cut with largest weight, is a $(2-2/k)$-approximation for the \ktc{} problem \cite{dahlhaus1994complexity}. This algorithm is the only currently-known approximation algorithm for \ktc{} which does not require solving a linear program \cite{cualinescu1998improved,karger2004rounding,buchbinder2013simplex,sharma2014multiway}. Thanks to their relative simplicity, isolating cuts are easily put into practice \cite{velednitsky2018isolation}. It is natural to wonder how the $(2-2/k)$-approximation performs on non worst-case instances.

In this paper, we establish a connection between isolating cuts and stability. We show that in $(k-1)$-stable instances of \ktc{}, the source sets of the isolating cuts equal the source sets of the unique optimal solution of that \ktc{} instance. It follows that the simple $(2-2/k)$-approximation of \cite{dahlhaus1994complexity} returns the optimal solution on $(k-1)$-stable instances. Ours is the first result showing that this $(2-2/k)$-approximation is an exact optimization algorithm on a special class of graphs.

Our result is tight. For $\epsilon > 0$, we construct $(k-1-\epsilon)$-stable instances of the \ktc{} problem which only have trivial isolating cuts: that is, the source set of the isolating cut for each terminal is just the terminal itself. In these $(k-1-\epsilon)$-stable instances, the $(2-2/k)$-approximation does not return an optimal solution.

In Section \ref{sect:preliminaries}, we introduce definitions and notation. In Section \ref{sect:proof}, we prove the main structural result, that in $(k-1)$-stable instance of \ktc{} the source sets of the isolating cuts equal the source sets of the optimal \ktc{}. In Section \ref{sect:limitations}, we construct a $(k-1-\epsilon)$-stable graph in which the source set of the isolating cut for each terminal is just the terminal itself.

\section{Preliminaries} \label{sect:preliminaries}
The notation $\{G = (V, E), w, T\}$ refers to an instance of the \ktc{} problem, where $G = (V, E)$ is an undirected graph with vertices $V$ and edges $E$. $T = \{t_1, \ldots, t_k\} \subseteq V$ is a set of $k$ terminals. The weight function $w$ is a function from $E$ to $\mathbb{R}^+$. 

For a subset of edges $E'\subseteq E$, the notation $w(E')$ is the total weight of edges in $E'$: 
$$w(E') = \sum_{e \in E'}{w(e)}.$$ 

For two disjoint subsets of vertices $V_1 \subseteq V$, $V_2 \subseteq V$, the notation $w(V_1, V_2)$ is the total weight of edges between $V_1$ and $V_2$: 
$$w(V_1, V_2) = \sum_{\substack{(v_1, v_2) \in E \\ v_1 \in V_1 \\ v_2 \in V_2}}{w((v_1, v_2))}.$$

We can further generalize this notation to allow for several disjoint subsets of vertices $V_1, \ldots, V_m \subseteq V$. In this case, we calculate the total weight of edges that go between two distinct subsets: 
$$w(V_1, \ldots, V_m) = \sum_i\sum_{j > i}{w(V_i, V_j)}.$$

For an instance $\{G = (V, E), w, T\}$ of the \ktc{} problem, we can refer to optimal solution in two equivalent ways. The first is in terms of the edges that are cut and the second is in terms of the source sets.

Referring to the optimal cut in terms of edges, we use the notation $E_\text{OPT}$: the subset of $E$ of minimum total weight whose removal ensures that there is no path between any pair of terminals. 

\emph{Source sets} are a partition of $V$ into $S_1, S_2, \ldots, S_k$ such that $t_i \in S_i$. We say that $S_i$ is the source set corresponding to $t_i$. We denote the optimal source sets $S_1^*, S_2^*, \ldots, S_k^*$.

The set of edges in the optimal cut is precisely the set of edges which go between distinct elements of the optimal partition $(S_1^*,\ldots,S_k^*)$. Combining the notation introduced in this section, $$w(E_\text{OPT}) = w(S_1^*, \ldots, S_k^*).$$

\subsection{Stability}
\begin{definition}[$\gamma$-Perturbation] \label{def:gammaperturbation}
Let $G = (V, E)$ be a weighted graph with edge weights $w$. Let $G' = (V, E)$ be a weighted graph with the same set of vertices $V$ and edges $E$ and a new set of edge weights $w'$ such that, for every $e \in E$ and some $\gamma > 1$,
$$w(e) \leq w'(e) \leq \gamma w(e).$$
Then $G'$ is a $\gamma$-perturbation of $G$.
\end{definition}

Stable instances are instances where the optimal solution remains uniquely optimal for any $\gamma$-perturbation of the weighted graph.

\begin{definition}[$\gamma$-Stability] \label{def:gammastability}
Let $\gamma > 1$. An instance $\{G = (V, E), w, T\}$ of \ktc{} is $\gamma$-stable if there is an optimal solution $E_\text{OPT}$ which is uniquely optimal for \ktc{} for every $\gamma$-perturbation of $G$.
\end{definition}

Note that the optimal solution need not be $\gamma$ times as good as \emph{any} other solution, since two solutions may share many edges. Given an alternative feasible solution, $E_\text{ALT}$, to the optimal cut, $E_\text{OPT}$, in a $\gamma$-stable instance, we can make a statement about the relative weights of the edges where the cuts differ. The following equivalence was first noted in \cite{makarychev2014bilu}:

\begin{lemma}[$\gamma$-Stability] \label{lemma:stability}
Consider an instance $\{G = (V, E), w, T\}$ of \ktc{} with optimal cut $E_\text{OPT}$. Let $\gamma > 1$. $G$ is $\gamma$-stable iff for every alternative feasible $k$-terminal cut $E_\text{ALT} \neq E_\text{OPT}$, we have $$w(E_\text{ALT} \setminus E_\text{OPT}) > \gamma w(E_\text{OPT} \setminus E_\text{ALT}).$$
\end{lemma}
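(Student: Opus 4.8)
The plan is to prove the biconditional by relating a $\gamma$-perturbation to the difference edge sets of two feasible cuts. The key observation is that if $E_\text{ALT}$ and $E_\text{OPT}$ are two feasible $k$-terminal cuts, then whether $E_\text{OPT}$ beats $E_\text{ALT}$ under a perturbation $w'$ depends only on the symmetric difference: since $w'(E_\text{OPT}) = w'(E_\text{OPT}\cap E_\text{ALT}) + w'(E_\text{OPT}\setminus E_\text{ALT})$ and similarly for $E_\text{ALT}$, the common part cancels, so $w'(E_\text{OPT}) < w'(E_\text{ALT})$ if and only if $w'(E_\text{OPT}\setminus E_\text{ALT}) < w'(E_\text{ALT}\setminus E_\text{OPT})$.

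For the forward direction ($\Rightarrow$), I would argue the contrapositive: suppose there is some feasible $E_\text{ALT}\neq E_\text{OPT}$ with $w(E_\text{ALT}\setminus E_\text{OPT}) \le \gamma\, w(E_\text{OPT}\setminus E_\text{ALT})$. I then build an explicit $\gamma$-perturbation $w'$ that makes $E_\text{OPT}$ not uniquely optimal: set $w'(e) = \gamma w(e)$ for every $e \in E_\text{OPT}\setminus E_\text{ALT}$ and $w'(e) = w(e)$ for all other edges. This is a valid $\gamma$-perturbation by Definition \ref{def:gammaperturbation}. Under $w'$ we have $w'(E_\text{OPT}\setminus E_\text{ALT}) = \gamma\, w(E_\text{OPT}\setminus E_\text{ALT}) \ge w(E_\text{ALT}\setminus E_\text{OPT}) = w'(E_\text{ALT}\setminus E_\text{OPT})$, so by the cancellation observation $w'(E_\text{ALT}) \le w'(E_\text{OPT})$, contradicting unique optimality of $E_\text{OPT}$ in the perturbation (it is at best tied, at worst beaten). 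Hence $\gamma$-stability forces the strict inequality for all $E_\text{ALT}$.

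For the reverse direction ($\Leftarrow$), assume $w(E_\text{ALT}\setminus E_\text{OPT}) > \gamma\, w(E_\text{OPT}\setminus E_\text{ALT})$ for every feasible $E_\text{ALT}\neq E_\text{OPT}$, and let $w'$ be an arbitrary $\gamma$-perturbation. For any such $E_\text{ALT}$, monotonicity of $w'$ gives $w'(E_\text{ALT}\setminus E_\text{OPT}) \ge w(E_\text{ALT}\setminus E_\text{OPT})$ and $w'(E_\text{OPT}\setminus E_\text{ALT}) \le \gamma\, w(E_\text{OPT}\setminus E_\text{ALT})$; chaining these with the hypothesis yields $w'(E_\text{ALT}\setminus E_\text{OPT}) > w'(E_\text{OPT}\setminus E_\text{ALT})$, and by the cancellation observation $w'(E_\text{OPT}) < w'(E_\text{ALT})$. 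Since this holds for every competitor, $E_\text{OPT}$ is uniquely optimal under $w'$, i.e. $G$ is $\gamma$-stable.

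I don't expect a serious obstacle here; the main thing to be careful about is bookkeeping with strict versus non-strict inequalities (the perturbation in the forward direction only needs to produce a \emph{tie} to break uniqueness, which is why the non-strict hypothesis suffices there) and confirming that the constructed $w'$ genuinely satisfies $w(e)\le w'(e)\le\gamma w(e)$ edge by edge. One should also note for completeness that a feasible $E_\text{ALT}$ exists only when we are comparing against the \emph{unique} optimum; if $E_\text{OPT}\setminus E_\text{ALT}=\emptyset$ for some feasible $E_\text{ALT}\neq E_\text{OPT}$ then $E_\text{ALT}\subsetneq E_\text{OPT}$ contradicts optimality of $E_\text{OPT}$, so the right-hand side is always a strict inequality between well-defined quantities.
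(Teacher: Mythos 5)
Your proof is correct and follows essentially the same route as the paper's: the reverse direction is identical (bound each difference set under the perturbation and cancel the common part), and your forward direction, which perturbs only $E_\text{OPT}\setminus E_\text{ALT}$ by $\gamma$, is the same construction as the paper's (which scales all of $E_\text{OPT}$; the intersection cancels anyway). One small slip in your closing remark: $E_\text{OPT}\setminus E_\text{ALT}=\emptyset$ gives $E_\text{OPT}\subsetneq E_\text{ALT}$, not the reverse containment --- that is the case the paper dismisses as trivial, whereas $E_\text{ALT}\subsetneq E_\text{OPT}$ (i.e.\ $E_\text{ALT}\setminus E_\text{OPT}=\emptyset$) is the one ruled out by optimality of $E_\text{OPT}$.
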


\begin{proof}
Note that $E_\text{ALT}$ cannot be a strict subset of $E_\text{OPT}$ (since $E_\text{OPT}$ is optimal) and the claim is trivial if $E_\text{OPT}$ is a strict subset of $E_\text{ALT}$. Thus, we can assume that both $E_\text{ALT} \setminus E_\text{OPT}$ and $E_\text{OPT} \setminus E_\text{ALT}$ are non-empty.

For the ``if'' direction, consider an arbitrary $\gamma$-perturbation of $G$ in which the edge $e$ is multiplied by $\gamma_e$. We first derive the following two inequalities,

\begin{align*}
\sum_{e \in E_\text{OPT}}{\gamma_e w(e)} & = \sum_{e \in E_\text{OPT} \cap E_\text{ALT}}{\gamma_e w(e)} + \sum_{e \in E_\text{OPT} \setminus E_\text{ALT}}{\gamma_e w(e)} \\ & \leq \sum_{e \in E_\text{OPT} \cap E_\text{ALT}}{\gamma_e w(e)} + \gamma w(E_\text{OPT} \setminus E_\text{ALT}),
\end{align*}

and

\begin{align*}
\sum_{e \in E_\text{ALT}}{\gamma_e w(e)} & = \sum_{e \in E_\text{OPT} \cap E_\text{ALT}}{\gamma_e w(e)} + \sum_{e \in E_\text{ALT} \setminus E_\text{OPT}}{\gamma_e w(e)} \\ & \geq \sum_{e \in E_\text{OPT} \cap E_\text{ALT}}{\gamma_e w(e)} + w(E_\text{ALT} \setminus E_\text{OPT}).
\end{align*}

Since we have the inequality
$$w(E_\text{ALT} \setminus E_\text{OPT}) > \gamma w(E_\text{OPT} \setminus E_\text{ALT}),$$
we conclude that
$$\sum_{E_\text{OPT}}{\gamma_e w(e)} < \sum_{E_\text{ALT}}{\gamma_e w(e)}.$$
Hence, $E_\text{OPT}$ remains uniquely optimal in any $\gamma$-perturbation.

For the ``only if'' direction, if $G$ is $\gamma$-stable, then we can multiply each edge in $E_\text{OPT}$ by $\gamma$ and $E_\text{OPT}$ will still be uniquely optimal:

$$w(E_\text{ALT} \setminus E_\text{OPT}) + \gamma w(E_\text{ALT} \cap E_\text{OPT}) > \gamma w(E_\text{OPT} \setminus E_\text{ALT}) + \gamma w(E_\text{ALT} \cap E_\text{OPT}).$$

Thus,
$$w(E_\text{ALT} \setminus E_\text{OPT}) > \gamma w(E_\text{OPT} \setminus E_\text{ALT}).$$ \qed
\end{proof}

We make a few observations about $\gamma$-stability:

\begin{fact}
\label{fact:unique}
Any \ktc{} instance that is stable with $\gamma > 1$ must have a unique optimal solution.
\end{fact}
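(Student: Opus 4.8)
The plan is to exploit the fact that $\gamma$-stability is a statement about \emph{every} $\gamma$-perturbation of $G$, and that the original graph is itself among those perturbations. Concretely, I would first observe that setting $w'(e) = w(e)$ for all $e \in E$ satisfies $w(e) \le w'(e) \le \gamma w(e)$ whenever $\gamma > 1$, so the identity is an admissible $\gamma$-perturbation in the sense of Definition \ref{def:gammaperturbation}. By Definition \ref{def:gammastability}, the distinguished solution $E_\text{OPT}$ is uniquely optimal for \ktc{} on this perturbation, which is nothing but the original instance $\{G, w, T\}$. Hence $\{G, w, T\}$ has a unique optimal solution, which is exactly the claim.

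As an alternative route (useful if one prefers to lean on the already-established machinery), I would invoke Lemma \ref{lemma:stability}: for any feasible $k$-terminal cut $E_\text{ALT} \neq E_\text{OPT}$, that lemma gives $w(E_\text{ALT} \setminus E_\text{OPT}) > \gamma\, w(E_\text{OPT} \setminus E_\text{ALT}) \ge w(E_\text{OPT} \setminus E_\text{ALT})$, where the second inequality uses $\gamma > 1$ together with the nonnegativity of edge weights. Adding $w(E_\text{ALT} \cap E_\text{OPT})$ to both ends yields $w(E_\text{ALT}) > w(E_\text{OPT})$, so no alternative feasible cut attains the optimum; uniqueness follows.

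I do not anticipate a genuine obstacle here: the only points requiring a moment's care are that Definition \ref{def:gammaperturbation} is stated with strict $\gamma > 1$, so the identity perturbation is genuinely admissible, and that the uniqueness asserted by Definition \ref{def:gammastability} is uniqueness among \emph{all} feasible $k$-terminal cuts rather than among some restricted family. Both the direct argument and the lemma-based argument are a line or two; I would present the direct one as the main proof since it uses only the definitions.
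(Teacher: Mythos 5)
Your main argument is exactly the paper's proof: the original instance is itself an admissible $\gamma$-perturbation, so Definition \ref{def:gammastability} directly forces uniqueness. The lemma-based alternative is also valid, but the direct argument you chose to present is the one the paper uses.
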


\begin{proof}
By Definition \ref{def:gammaperturbation}, any graph is a $\gamma$-perturbation of itself. Thus, by Definition \ref{def:gammastability}, the optimal solution must be unique. \qed
\end{proof}

\begin{fact}
Any \ktc{} instance that is $\gamma_2$-stable is also $\gamma_1$-stable for any $1 < \gamma_1 < \gamma_2$.
\end{fact}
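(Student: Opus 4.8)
The plan is to reduce everything to the characterization in Lemma~\ref{lemma:stability}, which converts $\gamma$-stability into a family of inequalities — one per alternative feasible cut — that only becomes \emph{weaker} as $\gamma$ decreases. First I would invoke Fact~\ref{fact:unique}: since $\gamma_2 > 1$, the instance has a unique optimal cut $E_\text{OPT}$, so there is no ambiguity about which cut plays the distinguished role in both the $\gamma_2$-stability hypothesis and the $\gamma_1$-stability conclusion.

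Next I would fix an arbitrary alternative feasible $k$-terminal cut $E_\text{ALT} \neq E_\text{OPT}$ and apply the ``only if'' direction of Lemma~\ref{lemma:stability} to the $\gamma_2$-stable instance, obtaining
$$w(E_\text{ALT} \setminus E_\text{OPT}) > \gamma_2\, w(E_\text{OPT} \setminus E_\text{ALT}).$$
Because $w$ takes values in $\mathbb{R}^+$, the quantity $w(E_\text{OPT} \setminus E_\text{ALT})$ is nonnegative, so $\gamma_2\, w(E_\text{OPT} \setminus E_\text{ALT}) \geq \gamma_1\, w(E_\text{OPT} \setminus E_\text{ALT})$ whenever $\gamma_1 < \gamma_2$. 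Chaining the two gives
$$w(E_\text{ALT} \setminus E_\text{OPT}) > \gamma_1\, w(E_\text{OPT} \setminus E_\text{ALT}),$$
and since $E_\text{ALT}$ was arbitrary, the ``if'' direction of Lemma~\ref{lemma:stability} certifies that the instance is $\gamma_1$-stable with the same optimal cut $E_\text{OPT}$.

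There is essentially no obstacle here; the only point requiring a moment's care is the degenerate case $E_\text{OPT} \setminus E_\text{ALT} = \emptyset$, but then $E_\text{OPT} \subsetneq E_\text{ALT}$, so $E_\text{ALT} \setminus E_\text{OPT}$ is nonempty with strictly positive weight and the desired strict inequality is immediate — the nonnegativity argument above covers this uniformly anyway. An alternative, even shorter route bypasses Lemma~\ref{lemma:stability} and argues straight from the definitions: every $\gamma_1$-perturbation is also a $\gamma_2$-perturbation, since $w(e) \le w'(e) \le \gamma_1 w(e) \le \gamma_2 w(e)$ for all $e \in E$, so unique optimality of $E_\text{OPT}$ across all $\gamma_2$-perturbations forces the same across all $\gamma_1$-perturbations. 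I expect the final write-up to use whichever of these two phrasings is more compact.
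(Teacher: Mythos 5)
Your proposal is correct, but your primary argument takes a longer road than the paper does. The paper's entire proof is your closing ``alternative, even shorter route'': the set of $\gamma_1$-perturbations is contained in the set of $\gamma_2$-perturbations (since $w(e) \le w'(e) \le \gamma_1 w(e)$ implies $w(e) \le w'(e) \le \gamma_2 w(e)$), so unique optimality of $E_\text{OPT}$ over the larger family immediately gives it over the smaller one. Your main route --- passing through both directions of Lemma~\ref{lemma:stability}, using nonnegativity of $w$ to weaken $\gamma_2\, w(E_\text{OPT}\setminus E_\text{ALT})$ to $\gamma_1\, w(E_\text{OPT}\setminus E_\text{ALT})$, and invoking Fact~\ref{fact:unique} to pin down the distinguished cut --- is sound, and it has the minor virtue of exhibiting the monotonicity at the level of the per-cut inequalities, which is the form actually used later in Theorem~\ref{thrm:main}. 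But it is strictly more machinery than needed: the containment-of-perturbation-sets argument requires neither the characterization lemma nor uniqueness of the optimum, and works verbatim from Definitions~\ref{def:gammaperturbation} and~\ref{def:gammastability}. Your instinct that the compact phrasing is preferable is the right one; that is the proof the paper gives.
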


\begin{proof}
The set of $\gamma_1$-perturbations is a subset of the set of $\gamma_2$-perturbations, since $$w(e) \leq w'(e) \leq \gamma_1 w(e) \implies w(e) \leq w'(e) \leq \gamma_2 w(e).$$ \qed
\end{proof}

Thus, for example, every instance which is $4$-stable is necessarily $2$-stable, but not the other way around.

\subsection{Isolating Cuts}
\begin{definition}[$t_i$-Isolating Cut]
The $t_i$-isolating cut is a minimum $(s,t)$-cut which separates source terminal $s = t_i$ from all the other terminals (shrunk into a single sink terminal $t = T \setminus \{t_i\}$). 
\end{definition}

We will use the notation $Q_i$ to denote the source set of this isolating cut (the set of vertices which remain connected to $t_i$). We use $E_i$ to denote the set of edges which are cut. Let $E_\text{ISO}$ be the union of the $E_i$ except for the $E_i$ with largest weight. The following lemmas are due to \cite{dahlhaus1994complexity}:

\begin{lemma} \label{lemma:approx}
$E_\text{ISO}$ is a $(2-2/k)$-approximation for the optimal $k$-terminal cut.
\end{lemma}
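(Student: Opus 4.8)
The plan is to compare each minimum isolating cut against the cut induced by the corresponding optimal source set, and then to pay for the inevitable double-counting of edges. First I would note that, for each $i$, the partition $(S_i^*, V \setminus S_i^*)$ separates $t_i$ from $T \setminus \{t_i\}$, so its edge set is a feasible $t_i$-isolating cut; since $E_i$ is a \emph{minimum} such cut, $w(E_i) \le w(S_i^*, V \setminus S_i^*)$. Next I would observe that every edge of $E_\text{OPT}$ — i.e., every edge crossing the optimal partition $(S_1^*, \ldots, S_k^*)$ — joins exactly two of the source sets, hence contributes to exactly two of the terms $w(S_i^*, V \setminus S_i^*)$. Summing over $i$ then gives
$$\sum_{i=1}^{k} w(E_i) \;\le\; \sum_{i=1}^{k} w(S_i^*, V \setminus S_i^*) \;=\; 2\, w(S_1^*,\ldots,S_k^*) \;=\; 2\, w(E_\text{OPT}).$$

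The second ingredient is feasibility of $E_\text{ISO}$. Let $m = \argmax_i w(E_i)$, so that $E_\text{ISO} = \bigcup_{i \neq m} E_i$. After removing these edges, each terminal $t_i$ with $i \neq m$ is already disconnected from every other terminal (since $E_i$ alone isolates $t_i$), and therefore $t_m$ is also separated from all of them; hence no path remains between any pair of terminals and $E_\text{ISO}$ is a feasible $k$-terminal cut. Finally, for the weight bound, $w(E_\text{ISO}) \le \sum_{i \neq m} w(E_i) = \sum_{i} w(E_i) - w(E_m)$, and because $w(E_m)$ is the largest of the $k$ terms it is at least their average, $w(E_m) \ge \tfrac1k \sum_i w(E_i)$. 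Combining,
$$w(E_\text{ISO}) \;\le\; \Bigl(1 - \tfrac1k\Bigr)\sum_{i=1}^{k} w(E_i) \;\le\; \Bigl(2 - \tfrac2k\Bigr) w(E_\text{OPT}),$$
which is the claimed approximation guarantee.

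All of the steps are short, and I expect the only one requiring genuine care to be the conceptual core: recognizing that the boundaries $(S_i^*, V \setminus S_i^*)$ of the optimal source sets are themselves valid (suboptimal) isolating cuts, and that summing the bounds $w(E_i) \le w(S_i^*, V \setminus S_i^*)$ double-counts each optimal edge exactly once, producing the factor of $2$. The feasibility of $E_\text{ISO}$ and the averaging argument that removes the largest cut are routine by comparison.
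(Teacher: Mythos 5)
Your proof is correct and is exactly the classical argument of Dahlhaus et al., which the paper itself does not reproduce but simply cites for this lemma. All three steps --- bounding each $w(E_i)$ by $w(S_i^*, V \setminus S_i^*)$, the double-counting giving $\sum_i w(E_i) \le 2\,w(E_\text{OPT})$, and discarding the largest isolating cut while preserving feasibility --- are the standard ones and are carried out correctly.
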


\begin{lemma} \label{lemma:isolation}
Let $\{G, w, T\}$ be an instance of \ktc{} and let $i \in \{1, \ldots, k\}$. Then there exists an optimal solution $(S_1^*,\ldots,S_k^*)$ in which $$Q_i \subseteq S_i^*.$$
\end{lemma}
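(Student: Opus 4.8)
The plan is to start from an arbitrary optimal partition $(S_1^*,\ldots,S_k^*)$ and to ``uncross'' it with the isolating source set $Q_i$. Concretely, I would form a new partition by enlarging part $i$ to $S_i' = S_i^* \cup Q_i$ and shrinking the others to $S_j' = S_j^* \setminus Q_i$ for $j \neq i$, and then show $w(S_1',\ldots,S_k') \leq w(S_1^*,\ldots,S_k^*)$. Since the right-hand side is the optimum, this forces equality, so $(S_1',\ldots,S_k')$ is an optimal solution with $Q_i \subseteq S_i'$, which is exactly the claim. One preliminary observation is needed throughout: $Q_i$ contains no terminal other than $t_i$, because it is the source side of a cut separating $t_i$ from $T\setminus\{t_i\}$; hence each $S_j'$ still contains $t_j$ and the new partition is a feasible $k$-terminal cut.

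The technical core is the inequality $f(S_i^* \cup Q_i) \leq f(S_i^*)$, where $f(X) = w(X, V\setminus X)$ is the (submodular) cut function. I would derive it as follows. First, $Q_i \cap S_i^*$ contains $t_i$ and no other terminal, so it is itself a feasible source set for an isolating cut of $t_i$; minimality of $Q_i$ then gives $f(Q_i) \leq f(Q_i \cap S_i^*)$. Second, submodularity gives $f(Q_i) + f(S_i^*) \geq f(Q_i \cup S_i^*) + f(Q_i \cap S_i^*)$. Chaining these two facts and cancelling $f(Q_i\cap S_i^*)$ yields $f(S_i^*) \geq f(Q_i \cup S_i^*)$.

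To conclude, I would split the weight of each partition into the weight of edges leaving part $i$ and the weight of edges cut among the remaining parts. The first piece equals $f(S_i^*)$ for the old partition and $f(S_i')=f(S_i^*\cup Q_i)$ for the new one, so it does not increase by the previous step. For the second piece, observe that $V \setminus S_i' \subseteq V \setminus S_i^*$ and that, on $V\setminus S_i'$, the new non-$i$ parts coincide with the old non-$i$ parts (deleting $Q_i$ affects nothing outside $S_i'$); therefore every edge cut among non-$i$ parts of the new partition was already cut among non-$i$ parts of the old one, and this piece does not increase either. Summing the two pieces gives $w(S_1',\ldots,S_k') \leq w(S_1^*,\ldots,S_k^*)$.

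I expect the bookkeeping in this last step to be the main obstacle: one must be careful that replacing $V\setminus S_i^*$ by the smaller set $V\setminus S_i'$ and restricting the old partition to it genuinely cannot introduce any new cut edges among the non-terminal-$i$ parts. A more hands-on alternative is a direct edge-by-edge argument, classifying each edge by which of $S_i^*\cap Q_i$, $S_i^*\setminus Q_i$, $Q_i\setminus S_i^*$, or $V\setminus(S_i^*\cup Q_i)$ its endpoints lie in, but this is messier and in the end rests on the same submodularity inequality.
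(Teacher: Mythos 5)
Your proposal is correct. Note that the paper does not prove this lemma at all---it is stated as a known result imported from the Dahlhaus et al.\ reference---so there is no in-paper argument to compare against; what you have written is essentially the standard proof of that classical fact. The two pillars both hold: $Q_i\cap S_i^*$ contains $t_i$ and no other terminal (since $Q_i$ separates $t_i$ from $T\setminus\{t_i\}$), so minimality of the isolating cut gives $f(Q_i)\le f(Q_i\cap S_i^*)$, and combined with submodularity this yields $f(S_i^*\cup Q_i)\le f(S_i^*)$. Your final bookkeeping step is also sound, and can be stated even more simply than you do: an edge cut between two non-$i$ parts of the new partition has endpoints in $S_j'\subseteq S_j^*$ and $S_l'\subseteq S_l^*$ with $j,l\neq i$, so it was already cut between two non-$i$ parts of the old partition; no appeal to $V\setminus S_i'$ is needed. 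Together with $f(S_i')\le f(S_i^*)$ this gives $w(S_1',\ldots,S_k')\le w(S_1^*,\ldots,S_k^*)$, forcing optimality of the uncrossed partition, which contains $Q_i$ in part $i$ as required.
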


The condition that ``there exists'' an optimal solution can make the implication of Lemma \ref{lemma:isolation} somewhat complicated when there are multiple optimal solutions, since the equation $Q_i \subseteq S_i^*$ need not be simultaneously true for all $i$. Conveniently, when an instance is $\gamma$-stable ($\gamma > 1$), it has a unique optimal solution (Fact \ref{fact:unique}). Thus, the condition $Q_i \subseteq S_i^*$ will be simultaneously true for all $i$.

\section{Proof of Main Result} \label{sect:proof}
\begin{theorem} \label{thrm:main}
Let $\{G, w, T\}$ be a $(k-1)$-stable instance of \ktc{}. Then, for all $i$, $Q_i = S_i^*$.
\end{theorem}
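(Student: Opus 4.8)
The plan is to show the two inclusions $Q_i \subseteq S_i^*$ and $S_i^* \subseteq Q_i$ separately. The first inclusion is immediate: since the instance is $(k-1)$-stable with $k-1 > 1$ (the statement is only interesting for $k \geq 3$; for $k=2$ isolating cuts already give the exact min cut), Fact \ref{fact:unique} guarantees a unique optimal solution, and Lemma \ref{lemma:isolation} then forces $Q_i \subseteq S_i^*$ simultaneously for every $i$, as the excerpt already notes. So the entire content is in proving $S_i^* \subseteq Q_i$ for all $i$, i.e. that the isolating cuts are already ``as large as'' the optimal source sets.

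The main step is a swapping/exchange argument driven by Lemma \ref{lemma:stability}. Suppose for contradiction that some optimal source set $S_j^*$ is not contained in $Q_j$. Since $Q_i \subseteq S_i^*$ for all $i$, the vertices of $V \setminus \bigcup_i Q_i$ are precisely those ``in dispute,'' and each such vertex lies in exactly one $S_i^*$. The idea is to build an alternative feasible $k$-terminal cut $E_\text{ALT}$ by taking, for each terminal, the isolating-cut source set $Q_i$ and then distributing the leftover vertices $V \setminus \bigcup_i Q_i$ among the terminals — but in a way that is cheaper than $E_\text{OPT}$ on the edges where they differ, contradicting $(k-1)$-stability. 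Concretely, I would compare $E_\text{OPT}$ against the family of cuts obtained by ``resetting'' one terminal at a time to its isolating cut: define, for each $i$, the partition that keeps $S_\ell^*$ for $\ell \neq i$ except it carves $Q_i$ out and reassigns $S_i^* \setminus Q_i$ to its neighbors. The key inequalities are that $w(E_i) = w(Q_i, V\setminus Q_i) \le w(S_i^*, V \setminus S_i^*)$ (minimality of the isolating cut), and that each edge of $E_\text{OPT}$ is counted in the boundaries of exactly two of the $S_i^*$'s, so summing the $k$ ``isolating boundary'' quantities against the $k$ ``optimal boundary'' quantities produces a factor that, after accounting for the largest isolating cut being dropped, comes out to the threshold $k-1$. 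This is exactly the arithmetic behind the $(2-2/k)$-approximation in Lemma \ref{lemma:approx}, repackaged as a statement about symmetric differences rather than total weights.

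More precisely, I expect the argument to run as follows. Let $A_i = S_i^* \setminus Q_i$ be the disputed part of terminal $i$'s optimal set; assume some $A_j \neq \emptyset$. Consider the alternative solution $E_\text{ALT}$ whose source sets are $Q_1, \ldots, Q_{k-1}$ together with $S_k^* \cup \bigcup_{i<k} A_i$ for the terminal whose isolating cut is largest (WLOG $t_k$) — that is, shrink every terminal except $t_k$ down to its isolating source set and give all the slack to $t_k$. This is feasible. Now $E_\text{OPT} \setminus E_\text{ALT}$ consists only of edges incident to the $A_i$'s that $E_\text{OPT}$ cuts but $E_\text{ALT}$ does not, and $E_\text{ALT} \setminus E_\text{OPT}$ consists of edges on the boundaries $\partial Q_i$ that $E_\text{OPT}$ does not cut. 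Using $w(\partial Q_i) \le w(\partial S_i^*)$ for each $i < k$ and the fact that dropping the largest isolating cut $E_k$ removes at least a $1/k$ fraction of $\sum_i w(\partial Q_i)$, a counting argument — each OPT edge lies on exactly two $S_i^*$ boundaries — should yield $w(E_\text{ALT}\setminus E_\text{OPT}) \le (k-1)\, w(E_\text{OPT}\setminus E_\text{ALT})$, contradicting Lemma \ref{lemma:stability}, which demands strict inequality with factor $k-1$. Hence every $A_i$ is empty and $S_i^* = Q_i$.

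The delicate point — and the step I expect to be the main obstacle — is making the bookkeeping of the symmetric difference \emph{exactly} tight, so that the bound is $(k-1)$ and not something weaker like $(2k-2)$ or $2(k-1)/(\text{something})$. The subtlety is that edges internal to a single disputed region, or edges between two different $A_i$ and $A_{i'}$, must be tracked carefully: they may be cut by $E_\text{OPT}$ but not $E_\text{ALT}$, or vice versa, and they should \emph{not} be double-counted the way boundary-to-$Q$ edges are. I would handle this by working directly with the partition-boundary identity $w(\text{partition}) = \tfrac12 \sum_i w(S_i, V\setminus S_i)$ and the submodularity/minimality of each isolating cut, rather than trying to enumerate edge types by hand; the clean way is probably to prove $w(E_\text{ALT}) \le w(E_\text{OPT}) + (k-1)\, w(E_\text{OPT}\setminus E_\text{ALT}) - w(E_\text{ALT}\setminus E_\text{OPT})\cdot(\text{something})$ and massage it, or, better, to choose $E_\text{ALT}$ as the actual $(2-2/k)$-approximation output $E_\text{ISO}$ itself and feed the known inequality $w(E_\text{ISO}) \le (2-2/k)\,w(E_\text{OPT})$ from Lemma \ref{lemma:approx} into Lemma \ref{lemma:stability} after splitting along the common edges $E_\text{ISO}\cap E_\text{OPT}$. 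That last route looks the most economical: it reduces the whole theorem to arithmetic combining Lemma \ref{lemma:approx}, Lemma \ref{lemma:stability}, and Lemma \ref{lemma:isolation}.
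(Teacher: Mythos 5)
Your overall skeleton (get $Q_i\subseteq S_i^*$ from Lemma \ref{lemma:isolation} plus uniqueness, then rule out $R_i:=S_i^*\setminus Q_i\neq\emptyset$ by playing the isolating-cut minimality against Lemma \ref{lemma:stability} for a suitable exchange) is the right one and matches the paper. But the step you yourself flag as ``the main obstacle'' is exactly where the proposal has a genuine gap, and neither of the two concrete routes you offer closes it. Your primary construction uses a \emph{single} alternative cut $\bigl(Q_1,\ldots,Q_{k-1},\,S_k^*\cup\bigcup_{i<k}R_i\bigr)$. Stability then bounds $\sum_{j\neq k}w(R_j,Q_j)$ from below by $(k-1)\bigl(w(R_1,\ldots,R_k)+\sum_{j\neq k}w(R_j,Q_k)\bigr)$, whereas the isolating-cut inequalities $w(Q_j,R_j)\le\sum_{\ell\neq j}w(R_j,R_\ell)+\sum_{\ell\neq j}w(R_j,Q_\ell)$ bound the same quantity from above by an expression containing all the cross terms $w(R_j,Q_\ell)$ with $\ell\neq j,k$. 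Those cross terms appear on the isolating-cut side but not on the stability side, so the two inequalities do not collide; choosing $t_k$ to be the terminal with the largest isolating cut does not help, since that choice only matters for the total-weight bound of Lemma \ref{lemma:approx}, not for these per-edge-set comparisons. The missing idea in the paper is to write down all $k$ alternative cuts $E_\text{ALT}^{(i)}$ (lumping \emph{every} $R_j$ with $Q_i$, for each $i$ in turn) and \emph{average} the resulting $k$ stability inequalities: the averaging produces $\sum_i\sum_{j\neq i}w(R_j,Q_i)$ on one side and $\sum_i\sum_{j\neq i}w(R_i,Q_j)$ on the other, which are equal by symmetry and cancel, leaving $(k-1)\,w(R_1,\ldots,R_k)<\tfrac{2(k-1)}{k}\,w(R_1,\ldots,R_k)$ --- a contradiction. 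Without that symmetrization the factor $k-1$ is not attainable.

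Your proposed ``most economical'' fallback --- feeding $w(E_\text{ISO})\le(2-2/k)\,w(E_\text{OPT})$ from Lemma \ref{lemma:approx} into Lemma \ref{lemma:stability} --- is a dead end for the reason you half-anticipate: the approximation guarantee controls total weights, and after splitting off the common part $z=w(E_\text{ISO}\cap E_\text{OPT})$ it only yields $w(E_\text{ISO}\setminus E_\text{OPT})\le(1-2/k)z+(2-2/k)\,w(E_\text{OPT}\setminus E_\text{ISO})$, which is perfectly compatible with the stability inequality $w(E_\text{ISO}\setminus E_\text{OPT})>(k-1)\,w(E_\text{OPT}\setminus E_\text{ISO})$ whenever $z$ is large. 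So no contradiction arises from total weights alone; one really has to work with the symmetric differences of the tailored family of alternative cuts, as the paper does.
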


\begin{figure*}
    \centering
    \subfloat[][The optimal partition, in which each $R_i$ is in a source set with its respective $Q_i$. \label{fig:qr_opt}]{
        \begin{tikzpicture}[
        venn circle/.style = {circle, draw, thick, fill=#1,
                              minimum width=8mm, opacity=0.9},
        ]
            \node (t1) [venn circle = pink] {$Q_1$};
            \node (t2) [below = 1cm of t1, venn circle = white] {$Q_2$};
            \node (t3) [below = 1cm of t2, venn circle = teal] {$Q_3$};
            \node (s1) [right = 1.6cm of t1, venn circle = pink] {$R_1$};
            \node (s2) [below = 1cm of s1, venn circle = white] {$R_2$};
            \node (s3) [below = 1cm of s2, venn circle = teal] {$R_3$};
        
            \path[draw,dotted]
            (t1) edge node {} (s1)
            (t2) edge node {} (s2)
            (t3) edge node {} (s3)
            ;
            
            \path[draw,thick]
            (t1) edge node {} (s2)
            (t1) edge node {} (s3)
            (t2) edge node {} (s1)
            (t2) edge node {} (s3)
            (t3) edge node {} (s1)
            (t3) edge node {} (s2)
            ;
            
            \path[draw,thick]
            (s1) edge node {} (s2)
            (s2) edge node {} (s3)
            ;
            \path[draw,thick]
            (s1) edge [bend left=30] node {} (s3)
            ;
            
            \path[draw,thick]
            (t1) edge node {} (t2)
            (t2) edge node {} (t3)
            ;
            \path[draw,thick]
            (t1) edge [bend right=30] node {} (t3)
            ;

        \end{tikzpicture}
    }
    \qquad
    \qquad
    \qquad
    \subfloat[][The alternative partition used in Theorem \ref{thrm:main} when $i=2$, where all of the $R_i$ are in a source set with $Q_2$. \label{fig:qr_alt}]{
        \begin{tikzpicture}[
        venn circle/.style = {circle, draw, thick, fill=#1,
                              minimum width=8mm, opacity=0.9},
        ]
            \node (t1) [venn circle = pink] {$Q_1$};
            \node (t2) [below = 1cm of t1, venn circle = white] {$Q_2$};
            \node (t3) [below = 1cm of t2, venn circle = teal] {$Q_3$};
            \node (s1) [right = 1.6cm of t1, venn circle = white] {$R_1$};
            \node (s2) [below = 1cm of s1, venn circle = white] {$R_2$};
            \node (s3) [below = 1cm of s2, venn circle = white] {$R_3$};
        
            \path[draw,thick]
            (t1) edge node {} (s1)
            (t3) edge node {} (s3)
            ;
            
            \path[draw,thick]
            (t1) edge node {} (s2)
            (t1) edge node {} (s3)
            (t3) edge node {} (s1)
            (t3) edge node {} (s2)
            ;
        
            \path[draw,dotted]
            (t2) edge node {} (s1)
            (t2) edge node {} (s2)
            (t2) edge node {} (s3)
            ;

            \path[draw,dotted]
            (s1) edge node {} (s2)
            (s2) edge node {} (s3)
            ;
            \path[draw,dotted]
            (s1) edge [bend left=30] node {} (s3)
            ;
            
            \path[draw,thick]
            (t1) edge node {} (t2)
            (t2) edge node {} (t3)
            ;
            \path[draw,thick]
            (t1) edge [bend right=30] node {} (t3)
            ;
            
        \end{tikzpicture}
    }
    \caption{The sets $Q_1, Q_2, Q_3$ and $R_1, R_2, R_3$ defined in Theorem \ref{thrm:main} when $k=3$. Solid lines represent edges which are cut. Dashed lines represent edges which are not cut.}
    \label{fig:qr}
\end{figure*}

\begin{proof}
We will primarily be working with the $k$ vertex sets $Q_1, \ldots, Q_k$ and the $k$ vertex sets $S_1^*\setminus Q_1, \ldots S_k^*\setminus Q_k$. For convenience, we will use the notation $R_i = S_i^*\setminus Q_i$. As a consequence of Lemma \ref{lemma:isolation}, $S_i^* = Q_i \cup R_i$. We will assume, for the sake of contradiction, that at least one $R_i$ is non-empty.

Since $Q_i$ is the source set for the isolating cut for terminal $t_i$:
\begin{align*}
w(Q_i, V \setminus Q_i) &\leq w(S_i^*, V \setminus S_i^*) \\
w(Q_i, V \setminus Q_i) &\leq w(R_i, V \setminus S_i^*) + w(Q_i, V \setminus S_i^*) \\
- w(Q_i, V \setminus S_i^*) + w(Q_i, V \setminus Q_i)  &\leq w(R_i, V \setminus S_i^*) \\
w(Q_i, R_i) &\leq w(R_i, V \setminus S_i^*) \\
w(Q_i, R_i) &\leq \sum_{\{j | j \neq i\}}{w(R_i, R_j)} + \sum_{\{j | j \neq i\}}{w(R_i, Q_j)}
\end{align*}
Summing these inequalities over all the $i$:
\begin{align}
\sum_i{w(Q_i, R_i)} &\leq \sum_i{\sum_{\{j | j \neq i\}}{w(R_i, R_j)}} + \sum_i{\sum_{\{j | j \neq i\}}{w(R_i, Q_j)}} \nonumber \\
\sum_i{w(Q_i, R_i)} &\leq 2 w(R_1, \ldots, R_k) + \sum_i{\sum_{\{j | j \neq i\}}{w(R_i, Q_j)}} \label{eq:isocutcondition}
\end{align}

Next, we will consider alternatives to the optimal cut $(S_1^*,\ldots,S_k^*)$ and apply Lemma \ref{lemma:stability}. The optimal cut can be written as $$(S_1^*, \ldots, S_k^*) = (Q_1 \cup R_1, \ldots, Q_k \cup R_k).$$ We will consider alternative cuts $E_\text{ALT}^{(i)}$ where all the $R_j$ are in the same set of the partition, associated with $Q_i$. That is, we will consider 
$$\Big(S_1, \ldots, S_{i-1}, S_i, S_{i+1}, \ldots, S_k\Big) = \Big(Q_1, \ldots, Q_{i-1}, Q_i \cup (R_1 \cup \ldots \cup R_k), Q_{i+1}, \ldots, Q_k\Big).$$
See Figure \ref{fig:qr} for an illustration. We assumed that at least one of the $R_i$ is non-empty, so at least $k-1$ of these alternative cuts are distinct from the optimal one\footnote[1]{If only one $R_i$ is non-empty, then $E_\text{OPT} = E^{(i)}_\text{ALT}$ for this $i$. The corresponding inequality in Equation \ref{eq:applylemma} is not strict (both sides are $0$), but the other $k-1$ inequalities are strict and so the average (Equation \ref{eq:averagingresult}) is still a strict inequality.}. In order to apply Lemma \ref{lemma:stability}, we need to calculate $w(E_\text{OPT}\setminus E_\text{ALT}^{(i)})$ and $w(E_\text{ALT}^{(i)} \setminus E_\text{OPT})$.

To calculate $w(E_\text{ALT}^{(i)} \setminus E_\text{OPT})$, consider the edges in $E_\text{ALT}^{(i)}$ with one endpoint in $Q_j$ ($j \neq i$). The only edges which are \emph{not} counted in $E_\text{OPT}$ are those which go to $R_j$. Thus, $$w(E_\text{ALT}^{(i)} \setminus E_\text{OPT}) = \sum_{\{j | j \neq i\}}{w(R_j, Q_j)}.$$

To calculate $w(E_\text{OPT}\setminus E_\text{ALT}^{(i)})$, we must consider the set of edges which are in $E_\text{OPT}$ but not in $E_\text{ALT}^{(i)}$. For an edge not to be in $E_\text{ALT}^{(i)}$, it must be internal to one of the $Q_j$ ($j \neq i$) or internal to $Q_i \cup (R_1 \cup \ldots \cup R_k)$. None of the internal edges of the $Q_j$ are in $E_\text{OPT}$, so we need only consider the internal edges of $Q_i \cup (R_1 \cup \ldots \cup R_k)$: 
$$w(E_\text{OPT}\setminus E_\text{ALT}^{(i)}) = w(R_1, \ldots, R_k) + \sum_{\{j | j \neq i\}}{w(R_j, Q_i)}.$$

We apply Lemma \ref{lemma:stability}, with $\gamma = k-1$:
\begin{align}
(k-1) \cdot w(E_\text{OPT}\setminus E_\text{ALT}^{(i)}) &< w(E_\text{ALT}^{(i)} \setminus E_\text{OPT}) \label{eq:applylemma} \\
(k-1) \cdot w(R_1, \ldots, R_k) + (k-1) \cdot \sum_{\{j | j \neq i\}}{w(R_j, Q_i)} &< \sum_{\{j | j \neq i\}}{w(R_j, Q_j)}. \nonumber \\
\intertext{Averaging over the $k$ inequalities (one for each $i$)\footnotemark[1]:}
(k-1) \cdot w(R_1, \ldots, R_k) + \frac{k-1}{k} \sum_i{\sum_{\{j | j \neq i\}}{w(R_j, Q_i)}} &< \frac{k-1}{k} \sum_i{w(R_i, Q_i)}. \label{eq:averagingresult} \\
\intertext{We combine this with the inequality derived in Equation \ref{eq:isocutcondition}:}
(k-1) \cdot w(R_1, \ldots, R_k) + \frac{k-1}{k} \sum_i{\sum_{\{j | j \neq i\}}{w(R_j, Q_i)}} &< 2 \frac{k-1}{k} w(R_1, \ldots, R_k) + \frac{k-1}{k} \sum_i{\sum_{\{j | j \neq i\}}{w(R_i, Q_j)}}. \nonumber
\intertext{Notice that}
\sum_i{\sum_{\{j | j \neq i\}}{w(R_j, Q_i)}} &= \sum_i{\sum_{\{j | j \neq i\}}{w(R_i, Q_j)}}. \nonumber \\
\intertext{Therefore,}
(k-1) \cdot w(R_1, \ldots, R_k) &< 2 \frac{k-1}{k} w(R_1, \ldots, R_k). \nonumber
\end{align}

This is a contradiction, so it must be the case that $R_i = \emptyset$ for all $i$. Thus, $Q_i = S_i^*$ for all $i$. \qed
\end{proof}

\begin{corollary}
Let $\{G, w, T\}$ be a $(k-1)$-stable instance of \ktc{}. Then $E_\text{ISO}$ is the unique optimal solution to \ktc{} on this instance.
\end{corollary}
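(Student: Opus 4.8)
The plan is to leverage Theorem~\ref{thrm:main} directly: once we know $Q_i = S_i^*$ for every $i$, the partition $(Q_1,\ldots,Q_k)$ induced by the isolating cuts is literally the optimal partition $(S_1^*,\ldots,S_k^*)$, and everything else is bookkeeping about edge sets.

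First I would observe that, since the $t_i$-isolating cut is a minimum $(s,t)$-cut separating $t_i$ from the other terminals, its cut-edge set $E_i$ is exactly the set of edges of $G$ with one endpoint in $Q_i$ and the other in $V\setminus Q_i$; by Theorem~\ref{thrm:main} this is the set of edges crossing between $S_i^*$ and $V\setminus S_i^*$. Taking the union (as sets) over all $i$, an edge lies in $\bigcup_i E_i$ iff its two endpoints fall in two distinct blocks of the partition $(S_1^*,\ldots,S_k^*)$, which is precisely the definition of $E_\text{OPT}$. Hence $\bigcup_{i=1}^k E_i = E_\text{OPT}$.

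Next I would argue that discarding the heaviest isolating cut does not shrink this union. Let $i^\star$ be the index of the $E_i$ of largest weight, so that $E_\text{ISO} = \bigcup_{i\neq i^\star} E_i$. Every edge $e \in E_{i^\star}$ crosses from $S_{i^\star}^*$ to some $S_j^*$ with $j \neq i^\star$; therefore $e$ also crosses $(S_j^*, V\setminus S_j^*)$, i.e.\ $e \in E_j$ for that $j \neq i^\star$. So $E_{i^\star} \subseteq \bigcup_{i\neq i^\star} E_i$, which gives $E_\text{ISO} = \bigcup_{i\neq i^\star} E_i = \bigcup_{i=1}^k E_i = E_\text{OPT}$. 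Finally, $E_\text{OPT}$ is feasible by definition, and by Fact~\ref{fact:unique} a $(k-1)$-stable instance (so $\gamma = k-1 > 1$, i.e.\ $k \geq 3$) has a unique optimal solution, namely $E_\text{OPT}$; combining this with $E_\text{ISO} = E_\text{OPT}$ proves the corollary.

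The only place requiring any care — the ``main obstacle,'' such as it is — is the middle step: one must resist simply quoting the $(2-2/k)$-approximation guarantee of Lemma~\ref{lemma:approx}, which by itself would only bound the \emph{weight} of $E_\text{ISO}$ and not show it equals $E_\text{OPT}$. Instead one needs the structural observation that, under a genuine partition into source sets, every crossing edge is counted by at least two of the isolating cuts, so deleting the single largest one removes nothing from the union.
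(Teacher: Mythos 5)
Your proof is correct and is exactly the argument the paper leaves implicit when it calls this an ``immediate'' corollary of Theorem~\ref{thrm:main}: once $Q_i = S_i^*$, each $E_i$ is the set of edges leaving $S_i^*$, every edge of $E_\text{OPT}$ lies in at least two of the $E_i$, so dropping the heaviest one still yields $\bigcup_{i\neq i^\star} E_i = E_\text{OPT}$, which is uniquely optimal by Fact~\ref{fact:unique}. Your closing remark is also well taken --- Lemma~\ref{lemma:approx} alone would only bound the weight of $E_\text{ISO}$, so the structural ``each crossing edge is counted twice'' observation is genuinely needed.
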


\section{Tightness of Main Result} \label{sect:limitations}
\begin{theorem} \label{thrm:limitations}
There exists a $(k-1-\epsilon)$-stable instance of \ktc{} for which $Q_i = \{t_i\} \neq S_i^*$ for all $i \in \{1, \ldots, k\}$.
\end{theorem}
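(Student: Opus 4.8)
The plan is to exhibit one explicit instance, depending on $k$ and $\epsilon$ (the interesting regime being $k\ge 3$ and $0<\epsilon<k-2$). I would take $G$ to be the complete bipartite graph whose two sides are the terminal set $T=\{t_1,\dots,t_k\}$ and an auxiliary set $\{v_1,\dots,v_k\}$, with a heavy edge $w(t_i,v_i)=a$ for each $i$ and light edges $w(t_i,v_j)=1$ for all $i\neq j$, where $a$ is any weight with $k-1-\epsilon<a<k-1$ (in particular $a>1$ in this range). The intended picture: each terminal $t_i$ is cheaply isolated by cutting its $k$ incident edges, yet the unique optimal \ktc{} instead puts $v_i$ together with $t_i$.

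The first real step is to pin down $E_\text{OPT}$. Since $G$ is bipartite between $T$ and $\{v_l\}$ and every $v_l$ is adjacent to all of $T$, any feasible \ktc{} $F$ must leave each $v_l$ connected to at most one terminal, because two un-cut edges $(t_i,v_l),(t_j,v_l)$ with $i\neq j$ create a forbidden $t_i$--$v_l$--$t_j$ path; hence $F$ contains all but at most one of the $k$ edges at $v_l$. The heaviest such edge is $(t_l,v_l)$ of weight $a>1$, so the weight of $F$ on edges at $v_l$ is at least $k-1$, with equality exactly when $F$ cuts the $k-1$ light edges at $v_l$. As every edge meets exactly one $v_l$, summing gives $w(F)\ge k(k-1)$, with equality only for $E_\text{OPT}:=\{(t_j,v_l):j\neq l\}$, which is itself feasible. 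So the optimum is unique and $S_i^*=\{t_i,v_i\}$.

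Next I would show the isolating cuts are trivial and compute the stability. By Lemma~\ref{lemma:isolation} together with uniqueness of the optimum, $Q_i\subseteq S_i^*=\{t_i,v_i\}$, so $Q_i$ is $\{t_i\}$ or $\{t_i,v_i\}$; the $t_i$-isolating cut with source set $\{t_i\}$ has weight $a+(k-1)$ while the one with source set $\{t_i,v_i\}$ has weight $2(k-1)$, and since $a<k-1$ the former is strictly lighter, so $\{t_i,v_i\}$ is not a minimum $t_i$-isolating cut and therefore $Q_i=\{t_i\}\neq S_i^*$ for every $i$. For stability I would invoke Lemma~\ref{lemma:stability}: given any feasible $E_\text{ALT}\neq E_\text{OPT}$, set $D=\{l:(t_l,v_l)\in E_\text{ALT}\}$, which is nonempty (otherwise the edges of $E_\text{ALT}$ would be forced to be exactly $E_\text{OPT}$). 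For $l\notin D$ all light edges at $v_l$ lie in $E_\text{ALT}$, so $E_\text{OPT}\setminus E_\text{ALT}$ contains only light edges at the $v_l$ with $l\in D$, whence $w(E_\text{OPT}\setminus E_\text{ALT})\le |D|$, while $E_\text{ALT}\setminus E_\text{OPT}=\{(t_l,v_l):l\in D\}$ has weight $a|D|$. Thus $w(E_\text{ALT}\setminus E_\text{OPT})=a|D|\ge a\cdot w(E_\text{OPT}\setminus E_\text{ALT})>(k-1-\epsilon)\,w(E_\text{OPT}\setminus E_\text{ALT})$ (the subcase $w(E_\text{OPT}\setminus E_\text{ALT})=0$ being immediate, as then $E_\text{ALT}\setminus E_\text{OPT}\neq\emptyset$), so $G$ is $(k-1-\epsilon)$-stable.

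The one genuinely load-bearing observation is the characterization of feasible cuts by the principle that each auxiliary vertex may remain connected to at most one terminal; once that is in hand, the optimum, the triviality of the isolating cuts, and the stability bound are all one-line computations, so I do not expect any step to be a serious obstacle. The real difficulty is in designing the gadget: the more obvious construction, in which the $v_i$ form a lightly weighted clique instead of being joined to all of $T$, makes the cost of grouping every $v_i$ into a single source set only about $k-1$, which is incompatible with the heavy edge $a$ being large enough, and it yields only roughly $2$-stability; the complete-bipartite gadget is what simultaneously keeps the isolating cuts trivial and pushes stability up to just below $k-1$, which is exactly what tightness requires.
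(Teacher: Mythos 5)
Your proposal is correct, and it uses essentially the paper's gadget: a heavy perfect matching $t_i$--$v_i$ of weight just below $k-1$ together with light complete-bipartite edges $t_i$--$v_j$, arranged so that the unique optimum pairs each $t_i$ with $v_i$ while the cheapest isolating cut for $t_i$ is the trivial one. The one structural difference is that the paper additionally places a clique of small weight ($a=2\epsilon$) on the auxiliary vertices; your analysis shows this is not needed. Dropping it makes the graph purely bipartite, so the feasible cuts are exactly the edge sets leaving at most one uncut edge at each $v_l$, and this exact characterization lets you verify the inequality of Lemma \ref{lemma:stability} by a clean per-vertex accounting ($w(E_\text{OPT}\setminus E_\text{ALT})\le |D|$ against $w(E_\text{ALT}\setminus E_\text{OPT})=a|D|$, with the $D=\emptyset$ and $w(E_\text{OPT}\setminus E_\text{ALT})=0$ degeneracies handled), rather than the paper's case analysis over the number $p$ of heavy edges cut and its somewhat informal ``tightest possible inequality'' step. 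All the parameter constraints are consistent: uniqueness of the optimum needs $a>1$, triviality of every $Q_i$ needs $a<k-1$, and stability needs $a>k-1-\epsilon$, all of which your range $k-1-\epsilon<a<k-1$ delivers for $k\ge 3$ and $0<\epsilon<k-2$ (outside that regime the statement is vacuous since stability requires $\gamma>1$). So this is a valid, and if anything slightly cleaner, proof of tightness.
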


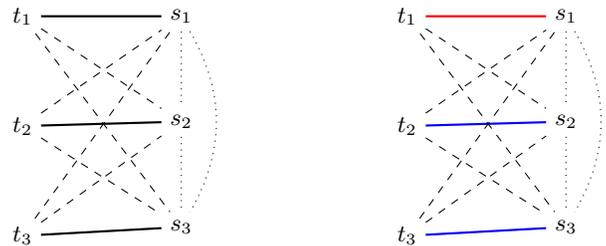
\begin{figure*}
    \centering
    \subfloat[][The dotted lines have weight $a$, solid have weight $b$, and dashed have weight $c$.\label{fig:k3example}]{
        \centering 
        \begin{tikzpicture}
            \node (t1) [] {$t_1$};
            \node (t2) [below = 1cm of t1] {$t_2$};
            \node (t3) [below = 1cm of t2] {$t_3$};
            \node (s1) [right = 1.6cm of t1] {$s_1$};
            \node (s2) [below = 1cm of s1] {$s_2$};
            \node (s3) [below = 1cm of s2] {$s_3$};
        
            \path[draw,thick]
            (t1) edge node {} (s1)
            (t2) edge node {} (s2)
            (t3) edge node {} (s3)
            ;
            
            \path[draw,dashed]
            (t1) edge node {} (s2)
            (t1) edge node {} (s3)
            (t2) edge node {} (s1)
            (t2) edge node {} (s3)
            (t3) edge node {} (s1)
            (t3) edge node {} (s2)
            ;
            
            \path[draw,dotted]
            (s1) edge node {} (s2)
            (s2) edge node {} (s3)
            ;
            
            \path[draw,dotted]
            (s1) edge [bend left=30] node {} (s3)
            ;
            
        \end{tikzpicture}        
    }
    \qquad
    \qquad
    \qquad
    \subfloat[][When $p=1$, we assume the red edges are in $E_\text{ALT}^{(1)}$ and the blue edge are not.\label{fig:p2example}]{
        \centering
        \begin{tikzpicture}
            \node (t1) {$t_1$};
            \node (t2) [below = 1cm of t1] {$t_2$};
            \node (t3) [below = 1cm of t2] {$t_3$};
            \node (s1) [right = 1.6cm of t1] {$s_1$};
            \node (s2) [below = 1cm of s1] {$s_2$};
            \node (s3) [below = 1cm of s2] {$s_3$};
        
            \path[draw,thick,blue]
            (t2) edge node {} (s2)
            (t3) edge node {} (s3)
            ;
            
            \path[draw,thick,red]
            (t1) edge node {} (s1)
            ;
            
            \path[draw,dashed]
            (t1) edge node {} (s2)
            (t1) edge node {} (s3)
            (t2) edge node {} (s1)
            (t2) edge node {} (s3)
            (t3) edge node {} (s1)
            (t3) edge node {} (s2)
            ;
            
            \path[draw,dotted]
            (s1) edge node {} (s2)
            (s2) edge node {} (s3)
            ;
            
            \path[draw,dotted]
            (s1) edge [bend left=30] node {} (s3)
            ;
        \end{tikzpicture}        
    }
    \caption{The construction used in Theorem \ref{thrm:limitations} when $k=3$.}
\end{figure*}

\begin{proof}
Consider a graph with $2k$ vertices. There are $k$ terminals ($t_1, \ldots, t_k$) and $k$ other vertices ($s_1, \ldots, s_k$). The $\binom{k}{2}$ edges between $s_i$ and $s_j$ ($i \neq j$) have weight $a \in \mathbb{R}^+$. The $k$ edges from $t_i$ to $s_i$ have weight $b \in \mathbb{R}^+$. The $k (k-1)$ edges from $t_i$ to $s_j$ ($i \neq j$) have weight $c \in \mathbb{R}^+$. Call this graph $G_k$. See Figure \ref{fig:k3example} for a drawing of $G_k$ when $k=3$. We will show that this graph has the desired properties for appropriate choices of $a, b, c \in \mathbb{R}^+$.

Consider an arbitrary $\gamma$. We would like to chose values of $a$, $b$, and $c$ such that the unique, $\gamma$-stable optimal cut is the one in which each $s_i$ remains connected to the corresponding $t_i$. Equivalently, zero edges with weight $b$ are cut. All the other edges \emph{are} cut. Thus, the optimal cut should have weight 
$$\binom{k}{2} a + k (k-1) c.$$

To verify that the optimal cut is $\gamma$-stable, we need to consider every other possible cut. We are helped by the symmetry of the construction. Consider an alternative cut, $E_\text{ALT}^{(p)}$, in which \textbf{exactly $p$ edges with weight $b$ are in the cut} (see figure \ref{fig:p2example}). Equivalently, exactly $k-p$ of the $s_i$ remain connected to the corresponding $t_i$. The optimal cut is the unique cut with $p=0$, so we need only consider alternative cuts where $p \in \{1, \ldots, k\}$. We would like to construct an inequality for the alternative cut of the form in Lemma \ref{lemma:stability}.

Consider $w(E_\text{ALT}^{(p)} \setminus E_\text{OPT})$. By construction, the number of edges of weight $b$ which are in $E_\text{ALT}^{(p)}$ but not $E_\text{OPT}$ is exactly $p$. Thus, 
$$w(E_\text{ALT}^{(p)} \setminus E_\text{OPT}) = p b.$$

Calculating $w(E_\text{OPT} \setminus E_\text{ALT}^{(p)})$ is more difficult. In order to create the tightest possible inequality in Lemma \ref{lemma:stability}, we want to include as few edges of weight $a$ and $c$ as possible in $E_\text{ALT}^{(p)}$ in order to maximize $w(E_\text{OPT}\setminus E_\text{ALT}^{(p)})$. We will consider the edges of weight $c$ and $a$ in the next two paragraphs.

Consider the edges of weight $c$. Notice that every $s_i$ is adjacent to all the terminals. Thus, if $s_i$ is one of the $k-p$ which remains connected to $t_i$, then all of the $k-1$ edges between $s_i$ and $t_j$ ($j \neq i$) must be in $E_\text{ALT}^{(p)}$. On the other hand, if the edge between $s_i$ and $t_i$ is in $E_\text{ALT}^{(p)}$, then at most one of the edges of weight $c$ adjacent to $s_i$ can be excluded from $E_\text{ALT}^{(p)}$. Thus, the number of edges of weight $c$ in $w(E_\text{OPT}\setminus E_\text{ALT}^{(p)})$ is at most $p$.

Consider the edges of weight $a$. Recall that $k-p$ of the edges with weight $b$ are \emph{not} in $E_\text{ALT}^{(p)}$, which means that there are $k-p$ vertices $s_i$ connected to the corresponding $t_i$. Between these $k-p$ vertices, all of the edges of weight $a$ must be in $E_\text{ALT}^{(p)}$. Thus, $\binom{k-p}{2}$ edges of weight $a$ must be in $E_\text{ALT}^{(p)}$. Of the $p$ vertices $s_j$ which are not connected to the corresponding $t_j$, each one can remain connected to at most one of the aforementioned $k-p$ vertices. When $p < k$, this gives an additional $p (k - p - 1)$ edges which must be in $E_\text{ALT}^{(p)}$.

Combining the arguments in the two preceding paragraphs, the strongest inequality we get from Lemma \ref{lemma:stability} when exactly $p$ edges of weight $b$ are cut is 
\begin{align}
p b &> \gamma \left(pc + (\binom{k}{2} - \binom{k-p}{2} - (k-p-1)p)a\right) &\text{if } p < k \nonumber \\
p b &> \gamma \left(pc + (\binom{k}{2})a\right) &\text{if } p = k. \nonumber \\
\intertext{Dividing both sides by $p$ and simplifying the coefficient of $a$:}
b &> \gamma \left(c + \frac{p+1}{2}a\right) &\text{if } p < k \nonumber \\
b &> \gamma \left(c + \frac{k-1}{2}a\right) &\text{if } p = k. \nonumber \\
\intertext{We need only consider the case $p = k-1$ to get the strongest possible inequality:}
b &> \gamma (c + \frac{k}{2} a). \label{eq:gamma_optimal_condition}
\end{align}

Computing the condition for the isolating cuts to have trivial source sets is easier. Knowing that $t_i$ and $s_i$ are connected in the optimal cut, we know that the optimal isolating cut can have source set either $\{t_i\}$ or $\{t_i, s_i\}$. The source set is $\{t_i\}$ if
\begin{align}
b + (k-1) c &< (k-1) a + 2 (k-1) c. \nonumber \\
b &< (k-1) (a + c).
\label{eq:isolation_condition}
\end{align}

In summary, if inequality \ref{eq:gamma_optimal_condition} is satisfied then $E_\text{OPT}$ is the $\gamma$-stable optimal \ktc{} and if inequality \ref{eq:isolation_condition} is satisfied then the isolating cuts have trivial source sets. When $\gamma = k - 1 - \epsilon$, the following values simultaneously satisfy inequalities \ref{eq:gamma_optimal_condition} and \ref{eq:isolation_condition}:
\begin{align*}
    a &= 2 \epsilon \\
    b &= k (k-1) (k-1-\epsilon) \\
    c & = k (k-1-\epsilon) - \epsilon.
\end{align*} \qed
\end{proof}

\section{Conclusions} \label{sect:conclusions}
In this paper, we proved that, in $(k-1)$-stable instances of \ktc{}, the source sets of the isolating cuts are the source sets of the unique optimal solution to that \ktc{} instance. As an immediate corollary, we concluded that the well-known $(2-2/k)$-approximation algorithm for \ktc{} is optimal for $(k-1)$-stable instances.

We also showed that the factor of $k-1$ is tight. We constructed $(k-1-\epsilon)$-stable instances of \ktc{} in which the source set of the isolating cut for a terminal is just the terminal itself. In those instances, the $(2-2/k)$-approximation algorithm does not return an optimal solution.

\bibliographystyle{splncs04}
\bibliography{ref}

\end{document}